\pgfplotsset{width=0.45\textwidth, compat=1.15}
\newenvironment{myblock}[1]{%
    \tcolorbox[beamer,%
    noparskip,breakable,
    colback=black!10!white,colframe=black,%
    colbacklower=black!10!white,%
    title={#1}]}%
    {\endtcolorbox}
\theoremstyle{plain}
\newtheorem{prop}{Proposition}
\newtheorem{lem}{Lemma}
\newtheorem{defn}{Definition}
\newtheorem{assum}{Assumption}
\theoremstyle{remark}
\newtheorem{rem}{Remark}
\newtheorem{exmp}{Example}
\def \BR{\mathsf{BR}}
\def \R{\mathbb{R}}
\def \D{\mathrm{D}}
\def \p{\mathrm{p}}
\def \a{\mathrm{a}}
\DeclareMathOperator*{\argmax}{arg\,max}
\DeclarePairedDelimiterX\set[1]\{\}{%
\renewcommand\given{\SetSymbol[\delimsize]}
#1
}
\DeclarePairedDelimiter\of()
\newcommand\ha{\hat a}
\newcommand{\con}[1]{#1^\star}
\newcommand{\va}{v_\a} %valuation of the agent
\newcommand{\cva}{V_\a} % cumulative valuation of the agent
\newcommand{\vp}{v_\p} %valuation of the agent
\newcommand\eqdef\coloneqq
\newcommand\ol\overline 
\providecommand\given{}
\DeclarePairedDelimiterXPP\Ex[1]{\mathbb{E}}[]{}{
\renewcommand\given{\nonscript\:\delimsize\vert\nonscript\:\mathopen{}}
#1}
\providecommand{\keywords}[1]{{\textit{Index terms---}} #1}
\title{Mitigating Information Asymmetry in Two-Stage Contracts with Non-Myopic Agents} 
\author[1]{Munther A. Dahleh} 
\author[1]{Thibaut Horel} 
\author[1,2]{M. Umar B. Niazi\thanks{This work is supported by the European Union’s Horizon Research and Innovation Programme under Marie Skłodowska-Curie grant agreement No. 101062523.}}
\affil[1]{Laboratory for Information and Decision Systems, Institute for Data, Systems, and Society, Massachusetts Institute of Technology, Cambridge MA 02139, USA (e-mail: \{\texttt{dahleh}, \texttt{thibauth}, \texttt{niazi}\}\texttt{@mit.edu}).}
\affil[2]{Division of Decision and Control Systems, Digital Futures, KTH Royal Institute of Technology, SE-100 44 Stockholm, Sweden.}
\date{}
\begin{document}
% \begin{frontmatter}
\maketitle

\begin{abstract}
	We consider a Stackelberg game in which a principal (she) establishes a
	two-stage contract with a non-myopic agent (he) whose type is unknown. The
	contract takes the form of an incentive function mapping the agent's
	first-stage action to his second-stage incentive. While the first-stage
	action reveals the agent's type under truthful play, a non-myopic agent could
	benefit from portraying a false type in the first stage to obtain a larger
	incentive in the second stage. The challenge is thus for the principal to
	design the incentive function so as to induce truthful play.
	We show that this is only possible with a constant,
	non-reactive incentive functions when the type space is continuous, whereas
	it can be achieved with reactive functions for discrete types.
    Additionally, we show that introducing an adjustment mechanism that penalizes inconsistent behavior across both stages allows the principal to design more flexible incentive functions.
\end{abstract}

\keywords{Principal-agent problems, Stackelberg games, contract theory, strategic learning.}

% \end{frontmatter}
%===============================================================================

\section{Introduction}\label{sec:intro}

The rise of digital technologies has turned human agents into active
participants within complex sociotechnical systems such as transportation
networks and power grids. Optimizing and controlling such systems require
planners to design incentives that take into account the interactions among
agents and their strategic behavior. 
For instance, \cite{brown2017} propose incentive mechanisms for routing drivers in a traffic network to reduce congestion, while \cite{niazi2023} design incentives to promote eco-driving in urban transportation to reduce emissions. The incentive mechanism proposed by \cite{satchidanandan2022} efficiently integrates electric vehicles into power grids as mobile battery storage units.
For a more in-depth exploration of incentive design applications in control, refer to \cite{chremos2024}.

When the principal (she) and the agent (he) interact repeatedly, integrating
learning algorithms with incentive mechanisms becomes key to optimizing system
performance in strategic environments. The InsurTech industry
\citep{holzapfel2023} is a good example of such interactions. During a
monitoring phase, the principal (the insurer) collects data to learn the agent's
behavioral patterns based on which the insurance plan is established for the
deployment stage. For instance, through InsurTech, the principal can
incentivize the agent to eco-drive during his daily commute or devise
an insurance policy based on the safety of the agent's driving behavior.

In this paper, we explore the challenges involved in learning when a principal
interacts with a strategic, non-myopic agent in a two-stage contract modeled as
a Stackelberg game. The principal makes decisions first, knowing that the agent
will observe her decisions and respond strategically. Since a non-myopic agent
anticipates his actions' consequences, the challenge is for the principal to
predict and influence his behavior effectively under adverse selection.

% \subsection{Related Work}
Recent advancements in learning-based incentive mechanisms focus on how a principal can leverage learning algorithms to align agent behavior with desired outcomes through incentives. A key challenge lies in addressing information asymmetry, where the principal lacks complete knowledge of agent preferences. When the agent is myopic, \cite{ratliff2020} propose an adaptive, model-based framework that integrates learning and control techniques. 
% This framework allows the principal to design incentive mechanisms that converge toward an optimal solution while accommodating various constraints and objectives. 
% Similarly, \cite{maheshwari2022} propose a framework where both the agent's strategies and the incentive mechanisms evolve through a learning process, while ensuring that any stable state leads to a socially optimal outcome and providing conditions for the convergence and uniqueness of this optimal mechanism. 
In a dynamic setting with an evolving system state, \cite{lauffer2023} propose an online learning algorithm for the principal, enabling her to learn and adjust incentives with unknown agent preferences. 
\cite{hutchinson2024} present a learning-based pricing mechanism for safety-critical networks, ensuring safety constraints while estimating user price responses over time.
Another development involves \cite{guruganesh2024} who extend algorithmic contract theory to such scenarios and propose dynamic contracts to optimize the principal's outcome, even when the agent uses a no-regret learning algorithm in response. 
% \cite{han2024} propose offering a menu of strategies, rather than a single strategy, to the agent in each round for learning his private information. By analyzing the agent's choices from this menu, the principal can gain valuable insights and improve the efficiency of learning the agent's type. 
% In dynamic moral hazard scenarios, where the agent’s capability is initially unknown, \cite{wang2024} design contracts that minimize regret. These contracts, applicable in discrete-time settings with two possible capability types, incentivize consistent effort from the agent. 

The case of non-myopic agents has been comparatively less studied and is
particularly challenging in learning-based incentive design. Indeed, such
agents can manipulate the learning process by acting strategically, and
sometimes deceptively. For example, a driver might adopt an
uncharacteristically safe driving style during the monitoring phase to secure a
lower insurance premium in the deployment stage of a usage-based insurance
plan. Similarly, an agent might misrepresent himself as someone who
prioritizes travel time, thus manipulating the principal into offering a higher
reward for adopting eco-driving \citep{niazi2023}.
%in order to encourage participation

\cite{gerardi2020} focus on long-term contracts and examine strategies employed by the agent to broker increasingly favorable terms from the principal. Moreover, when the agent is sufficiently non-myopic, the screening mechanism proposed by \cite{gerardi2020} becomes infeasible, leading to inefficiencies in the allocation of incentives and revealing challenges due to dynamic adverse selection. \cite{birmpas2020} demonstrates this vulnerability and proposes methods for the principal to identify the best commitment strategy despite such manipulation.

When the agent also uses a learning method to tip the outcome in his favor, \cite{lin2024} show that the principal can achieve near-optimal payoffs when the agent uses a specific no-regret learning model, but that this performance can decline significantly against agents with no-swap-regret learning models. 
% 
% \cite{kadan2017} explore how a principal can design a mechanism to incentivize agents with private information to act optimally by allowing randomization. Similarly, \cite{beccuti2018} propose using randomized mechanisms in renting scenarios, achieving better learning and revenue compared to traditional price-posting. \cite{chen2023} studies a principal-agent problem where the principal incentivizes the agent to gather information and devises an algorithm for delegated information acquisition, while ensuring truthful reporting through proper incentives.
% 
% The impact of the broader environment is also considered. For instance, \cite{bechavod2022} examine how limited transparency in decision rules can affect learning dynamics for different sub-populations of agents. 
An important work by \cite{haghtalab2022} addresses various aspects of learning in Stackelberg games with non-myopic agents that discount future utilities. They explore applications like security games and strategic classification, and propose a bandit algorithm that reacts to the agent's past actions with a sufficiently large delay. 
% These models also hold potential to address challenges such as misclassification of risk in usage-based insurance adoption \citep{holzapfel2023}.%, highlighting their versatility and applicability across diverse domains.

% \subsection{Our Contribution}
% By incorporating learning algorithms into incentive design, the principal can address challenges like information asymmetry and dynamic environments. However, a non-myopic agent also considers future rewards and optimizes his total utility across all stages of the game, potentially disrupting the principal's learning process by misleading her about his true type.

%In this paper, we study a two-stage contract modeled as a Stackelberg game
%played by a principal and a non-myopic agent.
In this paper, we explore incentive mechanisms that induce truthful play, meaning that a non-myopic agent finds it most beneficial to play according to his true type in both stages of the game. Our analysis in Section~\ref{sec:hidden} reveals that, when the mechanism ignores the agent's second-stage action, only constant incentive functions (independent of the agent's first-stage action in the first) can induce truthful behavior for a continuous type space. However, for a discrete type space, we show the existence of monotonic step functions as effective incentives. These functions reward the agent in proportion to his action or ``effort'' level, and induce truthful play when the jumps between reward levels are sufficiently small.

The fact that only constant functions induce truthful play for continuous types
is due to the combination of two kinds of information asymmetry in the
contract: adverse selection in the first stage (the principal does not know the
agent's type) and moral hazard in the second stage (the mechanism ignores the
second-stage action). Even though the first-stage action fully reveals the
agent's type when he plays truthfully, the principal cannot use the learned
type to verify that the second-stage action is consistent with it. This creates
an uneven playing field which advantages the agent due to the lack of
transparency in his actions.

% \iffalse
The principal faces two challenges in designing such contracts. First-stage
information asymmetry, also known as adverse selection, means that she does not
know the agent's true type. Therefore, the incentive function depends on the
agent's first-stage action that immediately reveals his type if and only if he
plays truthfully. Second, if the mechanism does not take into account the
agent's second-stage action, the principal faces moral hazard. This combination
makes it difficult to induce truthful play, as the principal cannot verify the
agent's type learned in the first stage by comparing it with the second-stage
action.
% \fi

To address these challenges, we introduce an additional adjustment mechanism in
Section~\ref{sec:observable}, which penalizes inconsistencies in the agent's
actions while also properly incentivizing truthful play when the agent is
consistent, that is, when his second-stage action matches the type portrayed by
the first-stage action. We prove that introducing such an adjustment mechanism can induce truthful play.

\section{Problem Definition}\label{sec:single}

We consider a contracting scenario between two players: the principal (referred to as \textit{she}) and
the agent (referred to as \textit{he}). After establishing notation and preliminaries, this section defines the Stackelberg game that models a single-stage principal-agent contract. We then analyze the agent's optimal response given the principal's chosen incentive under complete information. Subsequently, we introduce the two-stage contract under incomplete information and define truthful play under this contract, assuming the agent is non-myopic and his private information (type) is unknown to the principal.

\subsection{Notations and Preliminaries}\label{sec:prelims}

We denote by $\ol\R=\R\cup\set{-\infty,+\infty}$ the extended real line, and
for $A\subseteq\ol\R$, $A_+$ and $A_+^*$ denote respectively the sets of
non-negative and positive numbers in $A$.

Let $A\subseteq\R$ be an open set, then $\phi'$ denotes the derivative of the
function $\phi:A\to\R$ (when it exists), $\partial\phi$ denotes its
subdifferential, and $\D_i\pi$ denotes the partial derivative of the function
$\pi:A^2\to\R$ with respect to the $i$th argument (when it exists).

For a function $\phi:A\to\ol\R$ defined over a subset $A\subseteq\R$,
$\con\phi:\R\to\ol\R$ denotes its \emph{convex conjugate} (a.k.a.\
Fenchel-Legendre transform):
\begin{equation}
    \label{eq:conjugate}
	\con\phi(u) \eqdef \sup_{a\in A} \big[ua-\phi(a)\big].
\end{equation}
An immediate consequence of the definition is the Fenchel–Young inequality:
\begin{displaymath}
	\forall a\in A,\;\forall u\in U,\;\phi(a)+\con\phi(u)\geq au
\end{displaymath}
with equality iff $u\in\partial\phi(a)$.

The \emph{Bregman divergence} of a differentiable function $\phi:A\to\R$ is $D_\phi:A^2\to\R$ defined for $(x,y)\in A^2$ by
\begin{displaymath}
	D_\phi(x,y) = \phi(x)-\phi(y)-\phi'(y)\cdot (y-x).
\end{displaymath}
Note that $D_\phi$ vanishes on the diagonal $y=x$ and if $\phi'$ is continuous
(e.g.\ when $\phi$ is convex) then the same holds for $D_\phi$'s partial
derivatives. Finally, $D_\phi$ is non-negative when $\phi$ is convex.

\subsection{Single-stage Stackelberg Game}
We denote by $U\subseteq \R$ and $A\subseteq\R$ the action sets of the
principal and the agent, respectively. The principal's utility function
$\vp:U\times A\to\R$ is given by
\begin{equation}\label{eq:principal-utility}
    \vp(u,a) = \rho(a) - u a
\end{equation}
for some function $\rho:A \to \R$. The agent's utility $\va:U\times A\to\R$ is parametrized by his
type $\theta$ taking values in a type space $\Theta\subseteq\R$ and is given by
\begin{equation}
    \label{eq:agent-utility}
    \va(u,a;\theta) = u a - \theta \phi(a)
\end{equation}
for some function $\phi:A\to\R$.

We interpret the agent's action $a$ as an “effort” level (broadly conceived)
and the principal's action $u$ as a (marginal) monetary incentive for the agent
to exert a desired level of effort. With this interpretation, $\phi(a)$ can be
thought of as the cost incurred by the agent for adopting the effort level $a$,
while $\rho(a)$ is the corresponding benefit to the principal. We assume that
the principal moves first by announcing the incentive $u\in U$, and letting the
agent adapt his effort level accordingly.

\begin{exmp}
    The agent's action $a$ can be interpreted as his eco-driving level. This
	reflects the strategies he chooses to minimize emissions or fuel
	consumption. For instance, smoother acceleration and maintaining a
	consistent speed could be high eco-driving levels. The travel time for the
	agent, denoted by the function $\phi(a)$, depends on his eco-driving level and traffic conditions. Generally, more eco-friendly driving might lead to slightly longer travel times. To incentivize eco-driving, the principal offers a reward, $ua$, based on the agent's chosen level, $a$. Finally, the function $\rho(a)$ represents the reduction in emissions achieved by the agent's eco-driving level, $a$. 
    \hfill $\diamond$
\end{exmp}

% \begin{rem}
% 	We could have considered the following more general form for the agent's utility
% 	\begin{displaymath}
% 		\va(u,a;\theta) =  u \psi(a) - \theta\phi(a)
% 	\end{displaymath}
% 	for some function $\psi:A\to A$. If $\psi$ is a one-to-one, this
% 	reduces to \eqref{eq:agent-utility} after performing the change of variable
% 	$\psi(a)\to a$ and replacing $\phi$ with $\phi\circ \psi^{-1}$.
%     \hfill $\diamond$
% \end{rem}

Throughout the paper, we make the following
assumptions. 
% on the type space $\Theta$, principal's action space $U$, and
% agent's cost function $\phi$.

\begin{assum}
\label{assume:main}
$\Theta\subseteq\R_+^*$, $U\in\set{\R,\R_+^*}$, and $\phi$ is differentiable and strictly convex with $\phi'(A)=U$. 
\hfill $\diamond$
\end{assum}

Assumption~\ref{assume:main} guarantees that the agent's best-response to an incentive
$u$ is well-defined as stated below.
% in the following lemma.
% , which also collects some
% useful properties of the agent's best response.%\footnote{Our results also hold
% 	when the action space $A$ is a closed and bounded interval, and
% Assumption~\ref{assume:main} could therefore be relaxed. We choose to give it in this
% form to ease the exposition: it guarantees that the agent's best response is
% reached at an interior point, thus avoiding tedious case analyses at the
% boundaries.}

% \umar{Shouldn't the footnote be about the type space $\Theta$ because Assumption~\ref{assume:main} doesn't mention $A$.}

\begin{lem}\label{lem:br}
	For all $\theta\in\Theta$ and $u\in U$, the agent has a unique
	best-response in $A$:
	\begin{equation}\label{eq:br}
		a_\theta(u)\eqdef (\phi')^{-1}\of*{\frac u\theta}.
	\end{equation}
	For all $\theta\in\Theta$, the function $a_\theta:U\to A$ is
	a homeomorphism with inverse $a\mapsto \theta\phi'(a)$.
	Finally, the agent's utility when best-responding is given by
	\begin{equation}\label{eq:br-v}
		\va\big(u,a_\theta(u);\theta\big) = \theta\con\phi\of*{\frac u\theta}
	\end{equation}
    where $\con\phi$ is the convex conjugate of $\phi$. \hfill $\diamond$
\end{lem}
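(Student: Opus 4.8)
The plan is to treat the three claims in sequence, using strict convexity of $\phi$ as the workhorse. For the first claim, I would fix $\theta$ and $u$ and note that $a\mapsto\va(u,a;\theta)=ua-\theta\phi(a)$ is strictly concave, since $\theta>0$ and $\phi$ is strictly convex, so it admits at most one maximizer, characterized by the first-order condition $\D_2\va=u-\theta\phi'(a)=0$, i.e.\ $\phi'(a)=u/\theta$. To solve this I would invoke that $\phi'$ is strictly increasing (by strict convexity) and satisfies $\phi'(A)=U$, so $(\phi')^{-1}$ is well defined on $U$; the one subtlety is to verify that $u/\theta$ actually lands in $U$. This is where I would do a short case split on Assumption~\ref{assume:main}: if $U=\R$ the claim is immediate, and if $U=\R_+^*$ then $u>0$ and $\theta>0$ give $u/\theta>0$, so $u/\theta\in U$ in both cases. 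This yields the candidate $a_\theta(u)=(\phi')^{-1}(u/\theta)$, and strict concavity certifies it as the unique global maximum.

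For the second claim, I would observe that $a_\theta$ is the composition of the scaling $u\mapsto u/\theta$ (a homeomorphism of $U$ onto itself, using $\theta>0$ together with the case split above) with $(\phi')^{-1}$. The key point is that $\phi'$ is not only strictly increasing but also continuous: a derivative of a convex function is monotone and enjoys the intermediate-value property (Darboux), so it cannot jump and must therefore be continuous. Hence $\phi':A\to U$ is a continuous increasing bijection, thus a homeomorphism, and so is its inverse. Composing, $a_\theta:U\to A$ is a homeomorphism, and inverting the defining relation $\phi'(a_\theta(u))=u/\theta$ gives the inverse $a\mapsto\theta\phi'(a)$ directly.

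For the third claim, I would evaluate the utility at the best response and recognize the convex conjugate. Factoring out $\theta>0$,
\begin{displaymath}
	\va(u,a;\theta)=ua-\theta\phi(a)=\theta\of*{\tfrac u\theta\,a-\phi(a)},
\end{displaymath}
and taking the supremum over $a\in A$ turns the bracket into $\con\phi(u/\theta)$ by definition~\eqref{eq:conjugate}. Since $a_\theta(u)$ attains this supremum, identity~\eqref{eq:br-v} follows. Equivalently, I would apply the Fenchel--Young inequality at $(a,u/\theta)$: it holds with equality precisely when $u/\theta\in\partial\phi(a)$, which is exactly the optimality condition $\phi'(a_\theta(u))=u/\theta$; rearranging and multiplying by $\theta$ reproduces~\eqref{eq:br-v}.

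The only genuinely nontrivial step is establishing continuity of $\phi'$ for the homeomorphism claim, since differentiability alone does not obviously yield a continuous derivative; the Darboux argument (monotone plus intermediate-value property implies continuous) is what closes this gap cleanly. Everything else is routine once the domain bookkeeping $u/\theta\in U$ is handled by the case split.
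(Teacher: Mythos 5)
Your proof is correct. The paper states Lemma~\ref{lem:br} without proof, so there is nothing to compare against, but your argument is the natural one and fills in the details properly: strict concavity of $a\mapsto ua-\theta\phi(a)$ plus the first-order condition (with the domain check $u/\theta\in U=\phi'(A)$ via the case split on Assumption~\ref{assume:main}) gives existence and uniqueness of the best response; the Darboux argument for continuity of $\phi'$ is exactly the fact the paper itself invokes in passing in Section~\ref{sec:prelims}, and it correctly upgrades $\phi'$ to a homeomorphism onto $U$; and the identification of the optimal value with $\theta\con\phi(u/\theta)$ via Fenchel--Young at the point where $u/\theta\in\partial\phi(a)$ matches how \eqref{eq:br-v} is used throughout the paper.
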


If the agent's type $\theta$ is
known by the principal, the natural solution concept is a Stackelberg
equilibrium: the principal chooses $u\in U$ so as to maximize her utility
after the agent best-responds. In other words, the principal wants to
solve
\begin{equation}\label{eq:stack}
	\sup_{u\in U} \big[\rho\big(a_\theta(u)\big) - u a_\theta(u)\big].
\end{equation}

\begin{lem}\label{lem:stack}
	For all $\theta\in\Theta$, the single stage game has a complete
	information Stackelberg equilibrium iff %the supremum
    \begin{equation}\label{eq:principal-prob}
		\sup_{a\in A}\big[\rho(a)-\theta a\phi'(a)\big].
	\end{equation}
	is attained. When this condition is satisfied, $(u_e,a_e)\in U\times A$ is a
	Stackelberg equilibrium iff $a_e$ is a solution to \eqref{eq:principal-prob}
	and $u_e=\theta\phi'(a_e)$. \hfill $\diamond$
\end{lem}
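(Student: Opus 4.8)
The plan is to prove both directions of the stated equivalence by reducing the Stackelberg problem \eqref{eq:stack} to the reformulated problem \eqref{eq:principal-prob} via the homeomorphism from Lemma~\ref{lem:br}. The key observation is that since $a_\theta:U\to A$ is a bijection (indeed a homeomorphism), optimizing over $u\in U$ is equivalent to optimizing over $a\in A$ through the change of variable $a=a_\theta(u)$, or equivalently $u=\theta\phi'(a)$.

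First I would rewrite the principal's objective in \eqref{eq:stack}. Substituting $a=a_\theta(u)$, the term $\rho(a_\theta(u))$ becomes $\rho(a)$, and the term $u\,a_\theta(u)$ becomes $u\cdot a$. Using the inverse relation $u=\theta\phi'(a)$ from Lemma~\ref{lem:br}, this product equals $\theta a\phi'(a)$. Hence the objective $\rho(a_\theta(u))-u\,a_\theta(u)$ equals $\rho(a)-\theta a\phi'(a)$ under this substitution. Because $a_\theta$ is a homeomorphism from $U$ onto $A$, the supremum over $u\in U$ in \eqref{eq:stack} equals the supremum over $a\in A$ in \eqref{eq:principal-prob}, and the two suprema are attained (or not) simultaneously. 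This immediately yields the first claim: a complete-information Stackelberg equilibrium exists if and only if \eqref{eq:principal-prob} is attained.

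For the characterization of equilibria, I would argue that $(u_e,a_e)$ is a Stackelberg equilibrium precisely when $u_e$ attains the supremum in \eqref{eq:stack} and $a_e=a_\theta(u_e)$ is the agent's best-response. By the correspondence just established, $u_e$ is optimal for \eqref{eq:stack} if and only if $a_e=a_\theta(u_e)$ is optimal for \eqref{eq:principal-prob}. Inverting the best-response relation through Lemma~\ref{lem:br} gives $u_e=\theta\phi'(a_e)$, and conversely any maximizer $a_e$ of \eqref{eq:principal-prob} paired with $u_e=\theta\phi'(a_e)$ constitutes an equilibrium since then $a_e=a_\theta(u_e)$ is the unique best-response. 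This establishes the stated iff characterization.

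The main subtlety, rather than a genuine obstacle, is ensuring that the supremum over $U$ and the supremum over $A$ truly coincide as optimization problems, including the attainment claim. This rests entirely on $a_\theta$ being a \emph{homeomorphism} (a continuous bijection with continuous inverse), which Lemma~\ref{lem:br} provides: bijectivity guarantees the two feasible sets are in one-to-one correspondence so the supremal values agree, and continuity in both directions guarantees that a maximizer on one side transports to a maximizer on the other. I would therefore be careful to invoke the homeomorphism property explicitly rather than mere bijectivity, since attainment of the supremum is the part of the statement most sensitive to the topological structure.
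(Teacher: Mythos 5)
Your proof is correct and follows exactly the route the paper intends (the paper omits an explicit proof, but the remark following the lemma describes precisely this ``reverse-engineering'' of the incentive via the inverse best-response $a\mapsto\theta\phi'(a)$): the change of variables $u=\theta\phi'(a)$ turns the objective of \eqref{eq:stack} into that of \eqref{eq:principal-prob} pointwise, so the suprema and their maximizers correspond. One minor quibble: bijectivity of $a_\theta$ alone already transports maximizers between the two problems, since the objective values agree pointwise under the correspondence, so the continuity half of the homeomorphism property is not actually needed for the attainment claim.
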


% \begin{proof}
% 	Using that $a_\theta:A\to U$ is one-to-one (Lemma~\ref{lem:br}), we immediately
% 	obtain that the optimization problems \eqref{eq:stack} and
% 	\eqref{eq:principal-prob} are equivalent (that is, the suprema are equal).
% 	If $a_e\in A$ achieves the supremum in \eqref{eq:principal-prob}, the
% 	principal can induce this action by offering the incentive
% 	$u_e=a_\theta^{-1}(u_e)=\theta\phi'(a_e)$ to the agent (Lemma~\ref{lem:br}).
% 	In this case, $u_e$ achieves the supremum in \eqref{eq:stack} and
% 	$a_\theta(u_e) = a_e$. In other words, $(u_e, a_e)$ is a Stackelberg
% 	equilibrium. Conversely, any Stackelberg equilibrium is necessarily of this
% 	form and the proof is complete.
%  \qed
% \end{proof}

Lemma~\ref{lem:stack} reveals an important feature of the principal's
decision problem that will be useful in the subsequent sections. Instead of
thinking of the principal's action as choosing an incentive, we can
equivalently think of it as choosing which action the agent should take. Then,
the principal can “reverse-engineer” the incentive that would induce this
action with the help of the inverse best response function $a_\theta^{-1}$.

\begin{rem}
	Observe that the supremum in \eqref{eq:principal-prob} is reached under
	rather mild assumptions. For example, it is easy to check that it suffices
	to have $\rho$ continuous with $a\mapsto \rho(a)/a$ bounded.
    \hfill $\diamond$
\end{rem}

\begin{rem}
    When the principal does not know the agent's
	type $\theta$ but has a prior distribution with density $p_\theta$ over $\Theta$, the best she can do is to choose
    $u$ so as to maximize her expected utility:
    \begin{equation}\label{eq:ex-ante-opt}
	\sup_{u\in U} \Ex*{\rho\big(a_\theta(u)\big)-ua_\theta(u)}
    \end{equation}
    where the expectation is over $p_\theta$. \hfill $\diamond$
\end{rem}

% \tbo{not sure whether we want to introduce this here, but again, it is a
% useful baseline for subsequent sections. For example, the optimal $u$ here is
% also the optimal incentive that the principal should choose at both stages of
% the game in Section~\ref{sec:hidden} to get the optimal truthful mechanism in the case
% of continuous types.}

\subsection{Two-stage Contract with Unknown Type}
\label{subsec:contract}

We now extend the single-stage Stackelberg game studied in the previous section to a
two-stage contract under incomplete information, where the agent's type is unknown to the principal. The timing of the contract is as follows.

\begin{myblock}{Two-stage contract}
\begin{enumerate}
    \leftskip-0.2in
    \item At stage $t=1$:
    \begin{itemize}
        \leftskip-0.32in
        \item The principal plays $u_1\in U$ and commits to an incentive function $u_2: A \to U$ for the second stage.
        \item The agent plays $a_1 \in A$, which is observed by the principal.
    \end{itemize}

    \item At stage $t=2$:
    \begin{itemize}
        \leftskip-0.32in
        \item The incentive $u_2(a_1) \in U$ is realized according to the
			principal's commitment in the first stage.
        \item The agent plays $a_2\in A$, which may not be observed by the principal.
    \end{itemize}
\end{enumerate}
\end{myblock}

We interpret the first stage as a \emph{learning phase}, in which the principal
attempts to learn the type of the agent by observing his action $a_1$ and the second
stage as a \emph{deployment stage}, which exploits the knowledge gathered in the
first stage. The agent is non-myopic and plays in a way that maximizes his
\emph{cumulative} utility:
\begin{equation*}
	\va(u_1,a_1; \theta) + \va(u_2(a_1), a_2; \theta) 
	= u_1 a_1 - \theta \phi(a_1) + u_2(a_1) a_2 - \theta \phi(a_2).
\end{equation*}
The crucial feature of this game is that the second stage incentive,
$u_2(a_1)$, only depends on the first stage action. This models situations
in which the second stage action cannot be observed by the principal, or where,
for practical reasons, the incentive must be fixed at the beginning of the
second stage. The principal thus faces \emph{moral hazard}: the agent is
not accountable at the second stage and will always choose the action $a_2$
that maximizes his second stage utility. In other words, the agent plays
$a_2=a_\theta\big(u_2(a_1)\big)$, and using \eqref{eq:br-v}, his cumulative
utility becomes
\begin{equation}\label{eq:cumulative-utility}
	\cva(a_1) \eqdef u_1 a_1 -\theta\phi(a_1)+\theta\con\phi\of*{\frac{u_2(a_1)}{\theta}}.
\end{equation}

Moreover, the principal announces and commits to the incentive function $u_2$ at the
beginning of the game. In other words, even though the principal reacts to what
is observed in the first stage, the functional form of this reaction is known
by the agent and can thus be anticipated.

When the agent plays myopically in the first stage, $a_1=a_\theta(u_1)$, he reveals his type to the principal whenever
$\phi'\big(a_\theta(u_1)\big)\neq 0$. Indeed, by~\eqref{eq:br}, we have $\theta
= u_1 / \phi'\big(a_\theta(u_1)\big)$ in this case. However, the agent might
prefer to deviate from the myopic best response in the first stage, in order
not to fully reveal his type to the principal—or equivalently, pretend to be
of a different type—and secure himself a larger incentive at the second
stage.  In other words, the principal faces \emph{adverse selection} in the first
stage.

It is natural to ask whether it is possible for the principal to design $u_2$
so as to counteract the agent's adverse selection. This leads to the following
definition.

\begin{defn}
    We say that the incentive function $u_2:A\to U$ induces \emph{truthful} play if for
	every type $\theta\in\Theta$,
    \begin{equation}
    \label{eq:truthfulness-def}
    \cva\big(a_\theta(u_1)\big)\geq \cva(\hat a_1), \quad \text{for every}~ \hat a_1\in A. \quad \diamond
    \end{equation}
\end{defn}

Truthful play means that it is in the agent's best interest to play a
best-response \eqref{eq:br} at the first stage. This is useful because it
allows the principal to learn the agent's type $\theta$ from his action $a_1$.
If the agent always play truthfully, ($a_1=a_\theta(u_1)$), then the principal
would naturally designs the incentive function $u_2$ such that, for all
$\theta\in\Theta$, $u_2(a_\theta(u_1))$ solves \eqref{eq:stack}. However, if
the agent does not play truthfully, he can exploit the incentive function $u_2$
by causing the principal to learn a false type $\hat \theta$. Therefore, the
principal needs a more careful approach beyond just solving \eqref{eq:stack}.
In fact, she must seek a functional form of the incentive $u_2$ that induces
truthful play while maximizing her expected utility as in
\eqref{eq:ex-ante-opt}.

\begin{exmp}
    Consider a usage-based or pay-as-you-drive insurance. In this system, an agent's driving behavior during a monitoring phase (action denoted by $a_1$) determines their safety level.
    Safer driving (higher $a_1$) translates to a lower insurance premium in the following phase. The incentive for safe driving is essentially the reduced insurance cost, represented by $u_2(a_1)$. However, there can be inconveniences associated with driving more cautiously, such as increased travel time. This inconvenience is captured by the function $\phi:A\to \R$.  Additionally, some factors, like poor training or physical limitations, can make safe driving more challenging for an agent. These challenges are represented by the parameter $\theta$. A higher $\theta$ indicates greater difficulty in driving safely.

    Here's the key issue: according to \eqref{eq:br}, when $\theta$ is high, truthful safe driving, $a_1=a_\theta(u_1)$, results in a lower safety level and a higher premium.  A strategic non-myopic agent might exploit this by pretending to be extra safe (i.e., he chooses $a_1 \gg a_\theta(u_1)$) during monitoring phase to get a lower premium. However, there's no guarantee this safe behavior will continue in the actual insurance phase.
    Therefore, the insurance provider (principal) needs to design the incentive function $u_2:A\to U$ to encourage truthful safe driving in the monitoring phase so that an appropriate premium can be set for the agent in the deployment stage. This ensures fair pricing regardless of the agent's attempts to manipulate the system.
    \hfill $\diamond$
\end{exmp}

% \subsection{Problem Statement}

% We explore the existence of incentive functions $u_2:A\to U$ that induce truthful play. First, we analyze the set of such ``truthful contracts'' in two settings: with a continuous type space $\Theta=\R_+^*$ and a discrete type space $\Theta=\{\theta_L,\theta_H\}$ for some $\theta_H>\theta_L>0$. Then, we introduce an adjustment mechanism to address moral hazard in the second stage that induce truthful play by penalizing inconsistency between agent's actions across both stages of the game. 

\section{Two-stage Contract with Moral Hazard in the Second Stage}
\label{sec:hidden}

In this section, we investigate the question of existence of
non-trivial incentive functions $u_2:A\to U$ that induce truthful play $a_1=a_\theta(u_1)$. By non-trivial we mean that the incentive $u_2(a_1)$ varies with $a_1\in A$. We obtain very contrasted answers depending on
whether the type space is continuous or discrete. 

% \tbo{eventually, we will need to take a stance on the revelation principle.
% Since we are not directly asking the agent to report his type, I don't think
% that truthfulness is without loss of generality. In other words, there exist
% equilibria in the two-stage game that cannot be induced by a truthful
% contract.}

% \tbo{If true, then this strongly suggest that we should not solely focus on
% truthful contracts. Truthfulness is of course a desirable property: it
% simplifies the situation from the perspective of the agent because they no
% longer have to think about gaming the system. But unlike in situations
% where the revelation principle holds, truthful contracts do not fully exhaust
% the search space.}

% \tbo{This also relates to my suggestion to explore the extent to which the
% 	truthful mechanism of Section~\ref{sec:discrete} can be used for continuous
% 	types to achieve a larger revenue than the optimal truthful mechanism
% (despite the fact that some types will be then profit from deviating).}

\subsection{Inducing Truthful Play under Continuous Types}\label{sec:continuous}

When the type space $\Theta$ is continuous, the only incentive function $u_2$ that induces truthful play is trivial.

\begin{prop}
    \label{prop:impossibility}
    Assume $\Theta=\R_+^*$ and $U=\R_+^*$. Then, the incentive function $u_2: A\rightarrow U$ induces truthful play iff it is constant. \hfill $\diamond$
\end{prop}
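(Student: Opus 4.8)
My plan is to prove both directions, with the reverse implication immediate and the forward one carrying the work. If $u_2\equiv c$ is constant, then in \eqref{eq:cumulative-utility} the second-stage term $\theta\con\phi(c/\theta)$ is independent of $a_1$, so $\cva(a_1)=u_1a_1-\theta\phi(a_1)+\text{const}$ is maximized exactly at the first-stage best response $a_\theta(u_1)$ by Lemma~\ref{lem:br}; hence \eqref{eq:truthfulness-def} holds and $u_2$ induces truthful play.

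For the forward implication, I would reparametrize each admissible first-stage deviation $\hat a_1\in A$ by a \emph{reported type}. Since $u_1/\hat\theta$ sweeps out $\phi'(A)=U=\R_+^*$, Lemma~\ref{lem:br} makes $\hat\theta\mapsto a_{\hat\theta}(u_1)$ a homeomorphism of $\R_+^*$ onto $A$; thus $u_2$ is constant on $A$ iff $w(\theta)\eqdef u_2(a_\theta(u_1))$ is constant on $\R_+^*$, and truthfulness \eqref{eq:truthfulness-def} says precisely that, for every $\theta$, the map $\hat\theta\mapsto F(\hat\theta;\theta)\eqdef\cva(a_{\hat\theta}(u_1))=u_1a_{\hat\theta}(u_1)-\theta\phi(a_{\hat\theta}(u_1))+\theta\con\phi(w(\hat\theta)/\theta)$ attains its maximum at $\hat\theta=\theta$.

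The heart of the argument is a stationarity condition at $\hat\theta=\theta$. Differentiating the first-stage part in $\hat\theta$ produces the factor $u_1-\theta\phi'(a_{\hat\theta}(u_1))$, and since $\phi'(a_{\hat\theta}(u_1))=u_1/\hat\theta$ this equals $u_1(1-\theta/\hat\theta)$, which \emph{vanishes at} $\hat\theta=\theta$: the key cancellation, expressing that the first-stage action is a genuine best response precisely when the report is truthful. Hence the only surviving first-order term is the second-stage one, giving $\beta(\theta)\,w'(\theta)=0$, where $\beta(\theta)\eqdef(\con\phi)'(w(\theta)/\theta)=a_\theta(w(\theta))$ is the agent's induced second-stage effort (using $(\con\phi)'=(\phi')^{-1}$ from Lemma~\ref{lem:br}). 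Because $u_2$ is a priori only a function, I would establish this via one-sided difference quotients rather than assuming differentiability, bounding the second-stage gain from misreporting by the first-stage loss through the truthfulness inequality $\con\phi(w(\hat\theta)/\theta)-\con\phi(w(\theta)/\theta)\le D_{\con\phi}(u_1/\theta,\,u_1/\hat\theta)$, whose right-hand side is $o(|\hat\theta-\theta|)$ and therefore cannot offset a genuine first-order change in $w$.

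It remains to deduce that $w$ is constant. Wherever $\beta(\theta)\neq0$ the condition forces $w'(\theta)=0$, and $\beta(\theta)=(\phi')^{-1}(w(\theta)/\theta)$ can vanish only when $0\in A$ and $w(\theta)=\theta\phi'(0)$, i.e.\ along a locus on which the induced second-stage effort is identically zero. I expect this degenerate locus to be the main obstacle: it satisfies the first-order condition automatically, so stationarity alone does not exclude it. I would rule it out with the second-order condition: along $w(\theta)=\theta\phi'(0)$ the curvature of $F$ at $\hat\theta=\theta$ coming from the second-stage term is strictly positive (of order $1/\theta$), whereas the stabilizing curvature from the strictly convex first-stage cost is of order $1/\theta^3$; for sufficiently extreme types the former overwhelms the latter, turning $\hat\theta=\theta$ into a strict local minimum and contradicting truthful play. (Equivalently, one can read off a profitable deviation to an extreme reported type directly from the truthfulness inequality.) Hence $w'\equiv0$, $w$ is constant, and by the bijection above $u_2$ is constant.
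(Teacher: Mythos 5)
Your argument is, at its core, the paper's own argument read in the type variable rather than the action variable. The paper reduces truthfulness to the family of inequalities $\con\phi\bigl(u_2(\ha)\phi'(a)/u_1\bigr)-\con\phi\bigl(u_2(a)\phi'(a)/u_1\bigr)\le D_\phi(\ha,a)$ for all $(a,\ha)\in A^2$, then sketches necessity by first proving continuity of $u_2$ and then using convexity of $\con\phi$ to force $u_2'=0$. Your inequality $\con\phi(w(\hat\theta)/\theta)-\con\phi(w(\theta)/\theta)\le D_{\con\phi}(u_1/\theta,u_1/\hat\theta)$ is exactly the same statement via the conjugate duality $D_\phi(\ha,a)=D_{\con\phi}(\phi'(a),\phi'(\ha))$ and the homeomorphism $\theta\mapsto a_\theta(u_1)$, and your subsequent steps (the right-hand side is $o(|\hat\theta-\theta|)$, the subgradient inequality for $\con\phi$ converts this into a bound on the increment of $w$, two-sided quotients give $w'\equiv 0$) mirror the paper's sketch. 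The envelope-theorem framing of the first-stage cancellation is a nice way to see why only the second-stage term survives at first order, but it is not a different proof. Sufficiency is identical in both.

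The one substantive point of divergence is the degenerate locus $\beta(\theta)=(\phi')^{-1}(w(\theta)/\theta)=0$, where the subgradient coefficient vanishes and stationarity gives no information. You are right to isolate it: the paper's sketch sidesteps it by restricting to ``$a$ such that $u_2(a)\neq 0$,'' which is vacuous since $U=\R_+^*$ (the relevant quantity is the induced second-stage action $(\con\phi)'(u_2(a)\phi'(a)/u_1)$, not $u_2(a)$ itself), so on this point you are more careful than the paper. However, your proposed resolution is the soft spot of your write-up: the second-order comparison (curvature of order $1/\theta$ versus $1/\theta^3$) only yields a contradiction for \emph{sufficiently extreme} $\theta$, and computing the second-stage curvature along the locus presupposes that $w(\hat\theta)=\hat\theta\phi'(0)$ holds on a neighborhood. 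If the degenerate locus were confined to a bounded set of types (with $w$ locally constant off it and glued continuously), neither the extreme-type deviation nor the curvature comparison as stated disposes of it. Since the paper itself only sketches necessity and does not close this case either, I would not call this a disqualifying gap relative to the paper, but it is the step you would need to finish to have a complete proof.
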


In other words, the incentive function $u_2$ is completely non-reactive to the agent's action $a_1$ in the first stage.
By being non-reactive, the principal guarantees the agent that she will not
learn his type $\theta$, or at least not choose the incentive of the second
stage based on his type. As a result, the agent does not gain anything by
misrepresenting his type in the first stage. But as a result, the principal
loses the ability to adjust the agent's incentive in the second stage based on
the first-stage action and thus cannot exploit the type revealed by truthful
play.

Therein lies the dilemma: to incentivize truthfulness, the principal is forced to commit to a non-reactive incentive function, which makes the two-stage contract trivial. On the other hand, if the incentive function $u_2$ varies with the agent's first-stage action $a_1$, then it is always in the best interest of a non-myopic agent to play untruthfully.

\begin{proof}[Proof of Proposition~\ref{prop:impossibility}]
	Consider $\theta\in\Theta$ and $a=a_\theta(u_1)$. Using
	\eqref{eq:cumulative-utility}, inducing truthful play is equivalent to requiring that for all $\ha\in A$
	\begin{displaymath}
		u_1a-\theta\phi(a)+\theta\con\phi\of*{\frac{u_2(a)}\theta}
		\geq u_1\ha-\theta\phi(\ha)+\theta\con\phi\of*{\frac{u_2(\ha)}\theta}.
	\end{displaymath}
	Dividing by $\theta$ and rearranging
	\begin{displaymath}
		\con\phi\of*{\frac{u_2(\ha)}\theta}
		-\con\phi\of*{\frac{u_2(a)}\theta}
		\leq \phi(\ha)-\phi(a)-\frac{u_1}\theta(\ha-a).
	\end{displaymath}
	Finally, we eliminate $\theta$ using that
	$\phi'(a)=\frac{u_1}\theta$ due to \eqref{eq:br} and recognize a Bregman
	divergence on the right-hand side:
	\begin{equation}\label{eq:truthfulness}
		\con\phi\of*{\frac{u_2(\ha)\phi'(a)}{u_1}}
		-\con\phi\of*{\frac{u_2(a)\phi'(a)}{u_1}}
		\leq D_\phi(\ha,a).
	\end{equation}
	In other words, inducing truthful play is equivalent to
	\eqref{eq:truthfulness} for $a=a_\theta(u_1)$ and all $\ha\in A$, where $a_\theta(u_1)$ is given in \eqref{eq:br}. But as
	$\theta$ ranges over $\Theta$, $\frac{u_1}\theta$ ranges over $U$ (due to
	$\Theta=U=\R_+^*$), hence
	\begin{displaymath}
		\set*{a_\theta(u_1)\given \theta\in\Theta} =(\phi')^{-1}(U) =A
	\end{displaymath}
	where the last equality follows from Assumption~\ref{assume:main}.
	Therefore, inducing truthful play with respect to every $\theta\in\Theta$ is equivalent to
	\eqref{eq:truthfulness} for every $(a,\ha)\in A^2$, where $a=a_\theta(u_1)$.

\vspace{\baselineskip}\noindent\emph{Sufficiency.} If $u_2$ is
constant, then the left-hand side in \eqref{eq:truthfulness} is $0$, and by
convexity of $\phi$, the divergence $D_\phi$ is non-negative hence
truthfulness is satisfied.

\vspace{\baselineskip}\noindent\emph{Necessity [proof sketch].} 
Assume that $u_2$
induces truthful play, equivalently that \eqref{eq:truthfulness} holds for all
$(a,\ha)\in A^2$. 
First, we establish that $u_2$ is continuous over $A$. This can be shown by rearranging \eqref{eq:truthfulness} and writing in terms of $u_2(\hat a)$. Then, by letting $\hat a \to a$, one has that $u_2$ is continuous at $a$ by the continuity of $\phi$, $\phi'$, and $\con\phi$.
Finally, by using the convexity of $\con\phi$, we consider $a$ such that $u_2(a)\neq 0$ and show that $u_2$ is differentiable at $a$ with $u_2'(a)=0$. 
\end{proof}

\subsection{Inducing Truthful Play under Discrete Types}\label{sec:discrete}

We saw that when the type space $\Theta$ is continuous, the only incentive function $u_2$ that induces truthful play is a trivial, constant function.
However, when the type space is discrete, there are non-trivial incentive functions that induce truthful play. These incentive functions can be characterized as step functions with a sufficiently small step size.

\begin{prop}\label{prop:truth-d}
	Assume that $\Theta=\set{\theta_L,\theta_H}$ with $\theta_L<\theta_H$ and
	consider the incentive function
	\begin{displaymath}
		u_2(a_1)=\begin{cases}
			u_L&\text{if } a_1\geq t\\
			u_H&\text{otherwise}
		\end{cases}
	\end{displaymath}
	with
	\begin{displaymath}
		\frac{u_1}{\theta_H}\leq
		\phi'(t)\leq\frac{u_1}{\theta_L}.
	\end{displaymath}
	Then, when $u_L>u_H$, the mechanism is truthful iff
	\begin{equation}\label{eq:truth-d-1}
		\con\phi\of*{\frac{u_L}{\theta_H}}
		-\con\phi\of*{\frac{u_H}{\theta_H}}
		\leq \phi(t)+\con\phi\of*{\frac{u_1}{\theta_H}} - \frac{u_1
		t}{\theta_H}.
	\end{equation}
	Otherwise, when $u_H>u_L$, the mechanism is truthful iff
	\begin{equation}\label{eq:truth-d-2}
		\con\phi\of*{\frac{u_H}{\theta_L}}
		-\con\phi\of*{\frac{u_L}{\theta_L}}
		\leq \phi(t)+\con\phi\of*{\frac{u_1}{\theta_L}} - \frac{u_1
		t}{\theta_L}.
	\end{equation}
\end{prop}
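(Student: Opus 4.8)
The plan is to reuse the characterization of truthfulness obtained in the proof of Proposition~\ref{prop:impossibility}, whose per-type reduction does not rely on $\Theta$ being continuous. For a fixed type $\theta$ with truthful action $a=a_\theta(u_1)$, inducing truthful play for that type is equivalent to \eqref{eq:truthfulness} holding for all $\hat a\in A$; using $\phi'(a)=u_1/\theta$ from \eqref{eq:br}, this reads
\[
\con\phi\of*{\frac{u_2(\hat a)}{\theta}}-\con\phi\of*{\frac{u_2\big(a_\theta(u_1)\big)}{\theta}}\le D_\phi\big(\hat a,\,a_\theta(u_1)\big)\quad\text{for all }\hat a\in A.
\]
I would apply this separately to $\theta_L$ and $\theta_H$, exploiting that for the step function $u_2$ the left-hand side takes only the two values $u_L,u_H$.

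First I would locate the two truthful actions relative to the jump. By Lemma~\ref{lem:br} and strict convexity of $\phi$, the map $(\phi')^{-1}$ is increasing, so the threshold constraint $\frac{u_1}{\theta_H}\le\phi'(t)\le\frac{u_1}{\theta_L}$ is equivalent to $a_{\theta_H}(u_1)\le t\le a_{\theta_L}(u_1)$. Hence under truthful play the low type lands on the $u_L$-branch ($u_2(a_{\theta_L}(u_1))=u_L$) and the high type on the $u_H$-branch ($u_2(a_{\theta_H}(u_1))=u_H$). I would treat the generic strict case $a_{\theta_H}(u_1)<t<a_{\theta_L}(u_1)$ and note that the degenerate boundary $t\in\{a_{\theta_H}(u_1),a_{\theta_L}(u_1)\}$ follows by a direct check.

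Next I would reduce each type's condition to a single inequality. For fixed $a_\theta(u_1)$, the map $\hat a\mapsto D_\phi(\hat a,a_\theta(u_1))$ is convex, vanishes at $a_\theta(u_1)$, and is therefore monotone on either side of it, while the left-hand side is constant on each branch. On the agent's own branch the left-hand side is $0\le D_\phi$, imposing nothing; the only binding deviations are to the opposite branch, on which $D_\phi(\cdot,a_\theta(u_1))$ is monotone and hence bounded below by its value at the jump $t$. Since $D_\phi(t,a_\theta(u_1))=\phi(t)+\con\phi(u_1/\theta)-\frac{u_1 t}{\theta}$ by \eqref{eq:br-v} and the Fenchel–Young equality, this yields precisely: truthfulness for $\theta_H\iff$\eqref{eq:truth-d-1} and truthfulness for $\theta_L\iff$\eqref{eq:truth-d-2}. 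Truthful play thus holds iff both \eqref{eq:truth-d-1} and \eqref{eq:truth-d-2} hold.

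It remains to show that only one inequality is active in each regime, which I expect to be the main obstacle. When $u_L>u_H$, the low type already receives the larger incentive, so its temptation to switch branches should vanish: the left-hand side of \eqref{eq:truth-d-2} equals $\con\phi(u_H/\theta_L)-\con\phi(u_L/\theta_L)$, which is $\le 0$ provided $\con\phi$ is nondecreasing on $U$, while its right-hand side is $\ge 0$ by Fenchel–Young; hence \eqref{eq:truth-d-2} is automatic and \eqref{eq:truth-d-1} is the only constraint. The case $u_H>u_L$ is symmetric with the roles of the types exchanged, leaving \eqref{eq:truth-d-2}. The delicate points I would need to pin down are exactly the two underlying this slackness argument: (i) the monotonicity of $\con\phi$, equivalently that the agent's best-response effort $(\phi')^{-1}(U)$ is non-negative, a property to be confirmed from the standing assumptions (it holds, e.g., whenever $A\subseteq\R_+$); and (ii) the open-versus-closed geometry at the jump, since $\{a_1\ge t\}$ is closed while $\{a_1<t\}$ is open, so on the open branch the relevant supremum is only approached—though in both regimes the weak inequality in \eqref{eq:truthfulness-def} makes the resulting condition the stated ``$\le$''.
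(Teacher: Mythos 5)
Your proof is correct and follows essentially the same route as the paper's: it reduces each type's truthfulness condition to the single binding deviation at the threshold $t$ (whose value $D_\phi(t,a_\theta(u_1))=\phi(t)+\con\phi(u_1/\theta)-u_1t/\theta$ is exactly the paper's right-hand side) and observes that the type already receiving the larger incentive has nothing to gain from switching branches, which is the paper's ``type $\theta_L$ has no incentive to deviate'' step. The only differences are presentational—you route through the Bregman characterization \eqref{eq:truthfulness} while the paper compares cumulative utilities directly—and you are in fact more careful than the paper in flagging the implicit assumption that $\con\phi$ is nondecreasing on $U$ (equivalently that the best-response actions are non-negative), which both arguments need.
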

\begin{proof}
	We give a proof for the case $u_L>u_H$. The case $u_H>u_L$ is exactly identical
	after swapping $L$ for $H$.

	Observe that, if an agent of type $\theta_L$ plays truthfully in the first
	stage (that is, $\phi'(a_1) = u_1/\theta_L$) then he gets the highest possible
	incentive $u_L$ in the second stage and thus have no incentive to
	deviate. Hence, only an agent of type $\theta_H$ could benefit from playing
	untruthfully in the first stage in order to secure the incentive $u_L$ in
	the second stage.

	It is easy to see that for an agent of type $\theta_H$, the choice of
	action $a_1$ which maximizes his first stage utility while guaranteeing
	incentive $u_L$ at the second stage is precisely $t$. In order for this
	deviation not to be beneficial, we need
	\begin{displaymath}
		u_1t-\theta_H\phi(t)+\theta_H\con\phi\of*{\frac{u_L}{\theta_H}}
		\leq
		\theta_H\con\phi\of*{\frac{u_1}{\theta_H}}+\theta_H\con\phi\of*{\frac{u_H}{\theta_H}}.
	\end{displaymath}
	We obtained the desired condition after dividing the previous inequality
	by $\theta_H$ and rearranging. 
\end{proof}

\begin{rem}
	Note that by the Fenchel–Young inequality (cf.~Section~\ref{sec:prelims}) the
	upper-bounds in \eqref{eq:truth-d-1} and \eqref{eq:truth-d-2} are
	non-negative, and even positive as long as $\phi'(t)> u_1/\theta_H$ for
	\eqref{eq:truth-d-1} and $\phi'(t)< u_1/\theta_L$ for \eqref{eq:truth-d-2}.
	In other words, the characterization of truthfulness in Proposition~\ref{prop:truth-d}
	is nontrivial in that it allows for a range of values for $u_L$ and $u_H$
	with $u_L\neq u_H$.
	Furthermore, the most permissive setting of $t$ (that is, the one making
	the set of truthful mechanisms the largest) is achieved when
	$\phi'(t)=u_1/\theta_L$ for \eqref{eq:truth-d-1} and
	$\phi'(t)=u_1/\theta_H$ for \eqref{eq:truth-d-2}.
% 	\tbo{since some of the points in this remark are not entirely obvious, it
% 		might be worthwhile to convert it into a small corollary of the
% 	proposition. But I think this will become more relevant if we end up extending
% this section further, for example by computing the optimal setting of $u_L$ and
% $u_H$ subject to the truthfulness constraint.}
\hfill $\diamond$
\end{rem}

\section{Two-stage Contract with Adjustment Mechanism}\label{sec:observable}

When the principal cannot observe the agent's second-stage action, we saw in
Section~\ref{sec:hidden} that for a continuous type space,
the only incentive functions that induce truthful play are the constant ones.
Because the principal does not know the type $\theta$ of the agent, she faces adverse selection in the first stage. 
When she cannot observe the action of the agent in the second stage, 
she also faces moral hazard. As a result, she cannot validate the agent's type she learned in the first stage.
This gives the agent an unfair advantage over the principal
as it is impossible for her to observe any discrepancies
between the agent's actions in both stages. 
% In such a situation, the only thing the principal can do to induce truthfulness in agent's actions is by offering a trivial, non-reactive contract to the agent, which may not do any good to the principal's utility.

In this section, we aim to mitigate moral hazard by relaxing the assumption
that the principal cannot observe the second stage action. Specifically, in
addition to the second stage incentive function $u_2:A\to U$, we introduce an
\emph{adjustment} function $\pi:A^2\to\ol\R$ which depends on both actions and
can be thought of as an additional reward or penalty (depending on the sign of
$\pi$) collected at the end of the second stage. The principal commits to $\pi$
along with $u_2$ at the beginning of the game, and the timing of the modified
two-stage contract is as follows.

\iffalse
Now imagine if the situation of moral hazard is removed 
and the principal can observe the actions of the agent at both stages, 
then we ask whether it is possible for the principal 
to devise a non-trivial incentive function $u_2: A \to U$ and an adjustment function $\pi:A^2\to \ol \R$
that can induce truthfulness in the agent's action at the first stage.

The two-stage contract can be modified as follows:
\fi

\begin{myblock}{Two-stage contract with adjustment}
\begin{enumerate}
    \leftskip-0.2in
    \item At stage $t=1$:
    \begin{itemize}
        \leftskip-0.32in
        \item The principal plays $u_1\in U$ and commits to an incentive $u_2: A \rightarrow U$ and an adjustment $\pi:A^2\to \ol \R$ for the second stage.
        \item The agent plays $a_1 \in A$, which is observed by the principal.
    \end{itemize}
    \item At stage $t=2$:
    \begin{itemize}
        \leftskip-0.32in
        \item The incentive $u_2(a_1)$ is realized according to the principal's commitment at the first stage. 
        \item The agent plays $a_2 \in A$, which is also observed by the principal.
        \item The adjustment $\pi(a_1,a_2)$ is realized according to the principal's commitment at the first stage.
    \end{itemize}
\end{enumerate}
\end{myblock}

In the modified contract above, the cumulative utility of a non-myopic
agent of type $\theta\in\Theta$ is given by
\begin{subequations}
    \label{eq:F_a-2}
    \begin{align}
        \cva(a_1,a_2;\theta) &= \va(u_1,a_1; \theta) + \va(u_2(a_1), a_2; \theta) - \pi(a_1,a_2) \\
        &= u_1 a_1 - \theta \phi(a_1) + u_2(a_1) a_2 - \theta \phi(a_2) - \pi(a_1,a_2).
    \end{align}    
\end{subequations}

\subsection{Inducing Truthful Play through Adjustment Function}

The goal of the principal is now to jointly design the functions $u_2$ and $\pi$
so as to guarantee that an agent of type $\theta$ maximizes his cumulative
utility by (i) playing a best-response to $u_1$ in the first stage:
$a_1=a_\theta(u_1)$, and (ii) playing a best-response to $u_2(a_1)$ in the
second stage: $a_2=a_\theta\big(u_2(a_1)\big)$. When this is the case, we say
that the pair $(u_2,\pi)$ \emph{induces truthful play}.

In case of truthful play by the agent, there is a form of \emph{consistency}
satisfied by the pair of actions $(a_1,a_2)$. Indeed, since the agent plays
according to his true type $\theta$ at both stages of the game, we have by
Lemma~\ref{lem:br} that
\begin{displaymath}
	\frac{\phi'(a_1)}{u_1} = \frac{\phi'(a_2)}{u_2(a_1)}
\end{displaymath}
which allows us to write the second stage action as a function of the first
stage action
\begin{displaymath}
a_2= c(a_1) \coloneq (\phi')^{-1}\of*{\frac{u_2(a_1)}{u_1}\phi'(a_1)}.
\end{displaymath}
We will henceforth refer to the graph of the function $c$ as the
\emph{consistency curve}.
% , which is illustrated in Figure~\ref{fig:consistency-curve}.

% \begin{figure}[!]
%     \centering
%     \begin{tikzpicture}[scale=0.7]
%     \begin{axis}[
%         title={Consistency curve $a_2=c(a_1)$},
%         xlabel={First-stage action $a_1$},
%         ylabel={Second-stage action $a_2$},
%         xmin=0, xmax=7,
%         ymin=0, ymax=6,
%         ticks=none,
%     ]

%     \addplot[
%     color=blue,
%     smooth,
%     line width=1.5pt
%     ]
%     coordinates {
%     (1,1)(2,2)(2.5,3)(3.5,3.5)(4,3)(5,4)(6,5)
%     };
    
%     \end{axis}
%     \end{tikzpicture}
%     \caption{Consistency curve in the action space $A^2$.}
%     \label{fig:consistency-curve}
% \end{figure}

The following proposition shows that any choice of (differentiable) incentive
function $u_2$ can induce truthful play when paired with a suitable adjustment
function $\pi$.

\iffalse
Let us now consider $\pi:A^2\to \ol \R$ to be a generic adjustment function that does not necessarily penalize inconsistency as in Definition~\ref{def:penalize-inconsistency}. On the contrary, $\pi(a_1,a_2)$ can be interpreted as a reward to the agent for the pair of actions $(a_1,a_2)$ when $\pi(a_1,a_2)<0$, and it is interpreted as a penalty when $\pi(a_1,a_2)>0$.
Let us define the following notation
\[
g(a_1) = u_2(a_1) c(a_1)
\]
where $c(a_1)$ is given by \eqref{eq:consistency-curve}.
\fi

\begin{prop}\label{prop:truth-gen}
	Assume $\Theta=\R_+^*$ and $U=\R_+^*$. Let $u_2:A\to U$ be a differentiable
	function and consider an adjustment function of the form 
    \begin{equation}
        \pi(a_1,a_2) = \left\{\begin{array}{cl}
        f(a_1) & \text{if } a_2 = c(a_1) \\
        +\infty & \text{otherwise}
    \end{array}\right.
    \end{equation}
	for some function $f:A\to\R$.

	Then, the pair $(u_2,\pi)$ induces truthful play iff
	$f$ is differentiable with
    \begin{equation}
        \label{eq:f'}
        f'(a_1) = g'(a_1) - u_1 \frac{(\phi\circ c)'(a_1)}{\phi'(a_1)}.
    \end{equation}
    %Then, any incentive function $u_2:A\to U$ can induce truthful play iff the adjustment function $\pi:A^2\to\ol\R$ is
\end{prop}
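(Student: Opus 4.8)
The plan is to exploit the infinite penalty off the consistency curve to collapse the agent's two-dimensional problem to a scalar one, and then to reduce truthful play to a family of one-dimensional optimality conditions indexed by the type. Since $\pi(a_1,a_2)=+\infty$ whenever $a_2\neq c(a_1)$ while $f$ is real-valued, a rational agent never plays off the consistency curve: for any first-stage action $a_1$ his optimal response is $a_2=c(a_1)\in A$. Substituting this into \eqref{eq:F_a-2} and writing $g(a_1)\eqdef u_2(a_1)c(a_1)$, the agent of type $\theta$ maximizes over $a_1\in A$ the scalar objective
\[
\Phi_\theta(a_1)\eqdef u_1a_1-\theta\phi(a_1)+g(a_1)-\theta\,\phi\big(c(a_1)\big)-f(a_1).
\]
I would first record that, by the very definition of $c$, the identity $\phi'\big(a_\theta(u_1)\big)=u_1/\theta$ gives $c\big(a_\theta(u_1)\big)=a_\theta\big(u_2(a_\theta(u_1))\big)$, so that having $a^\star_\theta\eqdef a_\theta(u_1)$ maximize $\Phi_\theta$ is exactly truthful play in both stages. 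Consequently $(u_2,\pi)$ induces truthful play iff $a^\star_\theta$ is a global maximizer of $\Phi_\theta$ for every $\theta\in\R_+^*$; and exactly as in the proof of Proposition~\ref{prop:impossibility}, as $\theta$ ranges over $\R_+^*$ the point $a^\star_\theta$ ranges over all of $A$, so the condition must hold at every point of $A$.

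The computational engine of both directions is differentiation of $\Phi_\theta$. Assuming $f$ differentiable, stationarity at $a^\star_\theta$ reads
\[
0=u_1-\theta\phi'(a^\star_\theta)+g'(a^\star_\theta)-\theta\,(\phi\circ c)'(a^\star_\theta)-f'(a^\star_\theta),
\]
and using $\theta\phi'(a^\star_\theta)=u_1$ and then $\theta=u_1/\phi'(a^\star_\theta)$ yields \eqref{eq:f'}; since $a^\star_\theta$ sweeps $A$, this proves the form of $f'$ at every point and settles necessity of \eqref{eq:f'}. For sufficiency I would substitute \eqref{eq:f'} back into $\Phi_\theta'$ and obtain the factorization
\[
\Phi_\theta'(a_1)=\big(\phi'(a_1)+(\phi\circ c)'(a_1)\big)\left(\frac{u_1}{\phi'(a_1)}-\theta\right).
\]
The second factor is strictly decreasing in $a_1$ (because $\phi'>0$ is increasing) and vanishes exactly at $a^\star_\theta$; hence, provided the first factor $h(a_1)\eqdef\phi'(a_1)+(\phi\circ c)'(a_1)$ is positive, $\Phi_\theta'$ passes from $+$ to $-$ at $a^\star_\theta$, and integrating shows that $a^\star_\theta$ is the unique global maximizer for every $\theta$.

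The step I expect to be the main obstacle is controlling this first factor $h=\phi'+(\phi\circ c)'$, which is precisely the derivative of the total cost $Q(a_1)\eqdef\phi(a_1)+\phi\big(c(a_1)\big)$ incurred along the consistency curve. Because $h$ depends only on $u_2$ and $\phi$ and not on $f$, the first-order condition cannot repair a wrong sign, so the sign analysis above genuinely needs $h>0$, i.e.\ that total effort-cost increase as one moves up the consistency curve; I would verify this under the standing assumptions or isolate it as the exact regularity the construction requires. The same positivity also drives the necessity of differentiability of $f$ demanded by the statement: the optimality condition says that $a^\star_\theta$ minimizes $\tilde f(a_1)+\theta Q(a_1)$, where $\tilde f(a_1)\eqdef f(a_1)-u_1a_1-g(a_1)$; when $h=Q'>0$, $Q$ is a strictly increasing bijection, so in the variable $s=Q(a_1)$ the function $\tilde f\circ Q^{-1}$ is a pointwise supremum of affine maps $s\mapsto -\theta s+b_\theta$, hence convex. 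Since $\theta\mapsto a^\star_\theta$ is injective, distinct types yield distinct minimizers, which rules out kinks and forces this convex function to be differentiable on $Q(A)$; undoing the change of variables returns differentiability of $f$. I expect the bookkeeping of this envelope argument, together with tracking the regularity of $c$ inherited from $u_2$ and $\phi$, to be the most delicate part of the write-up.
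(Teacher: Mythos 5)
Your proposal follows the same skeleton as the paper's proof: collapse the agent's problem onto the consistency curve using the infinite off-curve penalty, and characterize truthful play as the requirement that $a_\theta(u_1)$ globally maximize the on-curve objective for every $\theta$, with \eqref{eq:f'} emerging as the first-order condition at $a_\theta(u_1)$ (the paper obtains it by letting $\hat a_1\to a_1$ in its inequality \eqref{eq:truthful-def-equiv}, you by stationarity of $\Phi_\theta$ --- the same computation). The execution of sufficiency differs: the paper integrates \eqref{eq:f'} and compares the result against the Bregman-divergence bound in \eqref{eq:truthful-def-equiv}, while you factor $\Phi_\theta'(a_1)=\bigl(\phi'(a_1)+(\phi\circ c)'(a_1)\bigr)\bigl(u_1/\phi'(a_1)-\theta\bigr)$ and argue by signs. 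These are the same inequality in disguise: the paper's comparison reduces to $\int_{a_1}^{\hat a_1}\bigl(\phi'(z)+(\phi\circ c)'(z)\bigr)\bigl(1/\phi'(a_1)-1/\phi'(z)\bigr)\,\d z\geq 0$, which is exactly the integrated form of your sign condition. Your envelope argument for why $f$ must be differentiable (convexity of $\tilde f\circ Q^{-1}$ as a supremum of affine maps, with kinks excluded by injectivity of $\theta\mapsto a_\theta(u_1)$) is a genuine addition: the paper's necessity part is only a sketch and never explains where differentiability of $f$ comes from.

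The obstacle you flag is real, and you should not expect to dispose of it: $h=\phi'+(\phi\circ c)'\geq 0$ does \emph{not} follow from Assumption~\ref{assume:main}. Since $(\phi\circ c)'=(\phi'\circ c)\,c'$ and $c'$ becomes arbitrarily negative when $u_2$ decreases steeply, $h$ can be negative on an open interval; choosing $\theta$ so that $a_\theta(u_1)$ sits at the left edge of that interval, your factorization shows $\Phi_\theta'>0$ immediately to its right, so $a_\theta(u_1)$ is not even a local maximizer and sufficiency fails despite \eqref{eq:f'} holding. The positivity of $h$ (equivalently, that the total on-curve cost $\phi+\phi\circ c$ is non-decreasing) must therefore be added as a hypothesis; it also underlies your differentiability argument, since $Q$ must be a strictly increasing bijection for the change of variables. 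Note that the paper's own sufficiency step quietly relies on the same condition --- its justification via ``$\phi'(z)\geq\phi'(a_1)$ and $D_\phi\geq 0$'' controls only the $\phi'$ part of $h$ and says nothing about $(\phi\circ c)'$. In short: same approach as the paper, a cleaner and more honest execution of sufficiency, and you have correctly isolated a gap that the paper's proof shares rather than resolves.
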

\begin{proof}
	We consider a deviation $(\ha_1,\ha_2)$ by the agent. Since
	$\pi(\ha_1,\ha_2)=+\infty$ if $\ha_2\neq c(\ha_1)$, we can assume without
	loss generality that $\ha_2=c(\ha_1)$. Then inducing truthful play \eqref{eq:truthfulness-def} is
	equivalent to
    %Without loss of generality, we fix $a_2=c(a_1)$. Then, inducing truthful play \eqref{eq:truthfulness-def} is equivalent to
    \begin{equation}
        \label{eq:truthful-def-equiv}
        f(\hat a_1) - f(a_1) \geq g(\hat a_1) - g(a_1)
        - \frac{u_1}{\phi'(a_1)} \Bigl[D_\phi(\hat a_1, a_1) + \phi\circ c(\hat a_1) - \phi\circ c(a_1) \Bigr]
    \end{equation}
    where $a_1=a_\theta(u_1)$ and $\hat a_1\in A$.
    
    \emph{Sufficiency.}
    By integrating \eqref{eq:f'} from $a_1=a_\theta(u_1)$ to $\hat a_1\in A$, where $\hat a_1\geq a_1$ (the proof for $\hat a_1\leq a_1$ is similar), we obtain
    \begin{displaymath}
        f(\hat a_1) - f(a_1) = g(\hat a_1) - g(a_1) -u_1 \int_{a_1}^{\hat a_1} \frac{(\phi\circ c)'(z)}{\phi'(z)}dz.
    \end{displaymath}
    Since $\phi$ is convex, $\phi'(z)\geq \phi(a_1)$ and $D_\phi(\hat a_1,a_1)\geq 0$. Therefore, we conclude that \eqref{eq:truthful-def-equiv} is satisfied.

    \emph{Necessity [proof sketch].}
    Let \eqref{eq:truthful-def-equiv} hold. Then, by approaching $\hat a_1\downarrow a_1$ and $\hat a_1 \uparrow a_1$, one obtains \eqref{eq:f'}.
    
\end{proof}

\subsection{Consistency-inducing Penalty is Extreme}

For a pair $(u_2,\pi)$ that induces truthful play as in
Proposition~\ref{prop:truth-gen}, we see that the function $\pi$ plays two
distinct roles: (i) it penalizes inconsistent pairs of actions ($a_2\neq c(a_1)$), and
(ii) it properly compensates pairs of actions along the consistency curve.

In this section, we focus on the first of these two roles and consider
adjustment functions which \emph{only} penalize inconsistent play, but vanish
on the consistency curve. This leads to the following definition.

\begin{defn}
    \label{def:penalize-inconsistency}
	We say that the adjustment function $\pi:A^2\to \ol \R$ \emph{penalizes inconsistency} if it is
	non-negative and if
	\begin{displaymath}
		\pi(a_1,a_2)=0 \iff 
	a_2 = c(a_1).
	\end{displaymath}
Equivalently, the agent incurs no penalty when playing consistently, but
incurs a positive penalty otherwise. \hfill $\diamond$
\end{defn}

% \iffalse
Let's say the agent plays $a_1=a_{\hat \theta}(u_1)$ in the first stage, for some $\hat \theta\in \Theta$ that may be different from his true type $\theta$. Then, $a_2=c(a_1)$ is the action the principal expects the agent to play in the second stage. If the agent's action in the second stage is inconsistent, i.e., $a_2\neq c(a_1)$, he is penalized by the principal according to $\pi(a_1,a_2)$ as in Definition~\ref{def:penalize-inconsistency}. To ensure that the agent is consistent in both stages with respect to his revealed type $\hat \theta = u_1/\phi'(a_1)$, we show that the adjustment function has to be an extreme form of penalty.
% \fi

% We say that the penalty $\pi:A^2\to U$ induces \emph{consistent} play if for every $\theta,\hat \theta \in\Theta$ and every $\hat a_1\in \BR(u_1,\hat{\theta})$, and every $\hat a_2\in \BR(u_2(\hat a_1),\hat{\theta})$, it holds that
% \[
%     \cva(\hat a_1, \hat a_2) \geq \cva(\hat a_1,\tilde a_2), \quad \text{for every}~ \tilde a_2\in A.
% \]
% In the above, $\hat \theta$ may or may not be equal to the true type $\theta$.

The following proposition shows that the only choice of penalty that guarantees
consistency is the extreme one with infinitely penalizes inconsistency.

\begin{prop}
Assume that $\pi$ penalizes inconsistency and that $a_1=a_{\hat \theta}(u_1)$ for some $\hat\theta \in\Theta$.
Then the choice of consistent second-stage action, $a_2=c(a_1)$, maximizes the agent's cumulative utility \eqref{eq:F_a-2} if and only if
\begin{equation}
    \label{eq:pi-crazy-infty}
    \pi(a_1,a_2) = \left\{\begin{array}{cl}
        0 & \text{if } a_2 = c(a_1) \\
        +\infty & \text{otherwise}.
    \end{array}\right.
\end{equation}
\end{prop}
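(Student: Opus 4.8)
The plan is to fix the first-stage action $a_1=a_{\hat\theta}(u_1)$ throughout, so that the only free variable is the second-stage action $a_2$, and to compare the cumulative utility \eqref{eq:F_a-2} at the consistent action $c(a_1)$ against an arbitrary $a_2$. Since $a_1$ is fixed, the terms $u_1a_1-\theta\phi(a_1)$ are constant, and using $\pi(a_1,c(a_1))=0$ from Definition~\ref{def:penalize-inconsistency}, the requirement that $c(a_1)$ maximize $\cva(a_1,\cdot\,;\theta)$ reduces to the one-variable condition
\begin{equation}
\label{eq:cons-reduce}
\pi(a_1,a_2)\ \geq\ u_2(a_1)\bigl(a_2-c(a_1)\bigr)-\theta\bigl(\phi(a_2)-\phi(c(a_1))\bigr)
\end{equation}
for every $a_2\in A$. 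Because the principal commits to $\pi$ without knowing the agent's true type, I read the conclusion as requiring \eqref{eq:cons-reduce} to hold for \emph{every} $\theta\in\Theta=\R_+^*$; this is precisely the quantifier over the unknown type that will force the penalty to be extreme. A structural fact I would record first is that, since $\phi'(a_1)=u_1/\hat\theta$ by Lemma~\ref{lem:br}, the definition of $c$ gives $c(a_1)=a_{\hat\theta}\bigl(u_2(a_1)\bigr)$; thus $c(a_1)$ is the point maximizing $u_2(a_1)a-\hat\theta\phi(a)$, which pins down the sign behavior of $\phi(a_2)-\phi(c(a_1))$.

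For sufficiency, suppose $\pi$ is the extreme penalty \eqref{eq:pi-crazy-infty}. Then for any inconsistent $a_2\neq c(a_1)$ the term $-\pi(a_1,a_2)=-\infty$ drives $\cva(a_1,a_2;\theta)$ to $-\infty$, whereas $\cva\bigl(a_1,c(a_1);\theta\bigr)$ is finite; hence $c(a_1)$ is trivially the maximizer for every $\theta$. This direction needs no computation.

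For necessity, I would argue that the right-hand side of \eqref{eq:cons-reduce} is affine in $\theta$ with slope $-\bigl(\phi(a_2)-\phi(c(a_1))\bigr)$. By strict convexity of $\phi$ (Assumption~\ref{assume:main}), an agent whose true cost type $\theta$ is much larger than the revealed $\hat\theta$ strictly prefers to under-perform relative to $c(a_1)$, and the corresponding inconsistent actions satisfy $\phi(a_2)<\phi(c(a_1))$; for such $a_2$ the slope above is strictly positive, so letting $\theta\to+\infty$ makes the right-hand side of \eqref{eq:cons-reduce} tend to $+\infty$. Any finite value of $\pi(a_1,a_2)$ would then violate \eqref{eq:cons-reduce} for large enough $\theta$, forcing $\pi(a_1,a_2)=+\infty$ at every inconsistent $a_2$ of this type. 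Combined with $\pi(a_1,c(a_1))=0$, this yields the extreme form \eqref{eq:pi-crazy-infty}.

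The main obstacle is handling the quantifier over the unknown true type cleanly and covering \emph{all} inconsistent $a_2$. The $\theta\to+\infty$ limit immediately disposes of the under-performing deviations ($\phi(a_2)<\phi(c(a_1))$), where the agent's incentive to slack grows without bound. The delicate point is the complementary regime $\phi(a_2)>\phi(c(a_1))$, where the affine-in-$\theta$ bound is maximized as $\theta\to 0^{+}$ and appears only to require a finite penalty; making the full two-sided claim \eqref{eq:pi-crazy-infty} therefore requires a careful examination of this opposite limit, and this is the step I expect to demand the most care. The tools for both regimes are the identification $c(a_1)=a_{\hat\theta}(u_2(a_1))$ from Lemma~\ref{lem:br} and the strict convexity of $\phi$, which together control the sign of $\phi(a_2)-\phi(c(a_1))$ and hence the direction in which $\theta$ must be driven.
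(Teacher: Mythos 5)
Your argument is the paper's argument: the same reduction of the maximization to the pointwise inequality $\pi(a_1,\hat a_2)\geq u_2(a_1)\bigl(\hat a_2-c(a_1)\bigr)-\theta\bigl(\phi(\hat a_2)-\phi(c(a_1))\bigr)$ for all $\hat a_2\in A$ and all $\theta\in\R_+^*$, the same trivial sufficiency step, and the same $\theta\to+\infty$ limit for necessity. The observation that $c(a_1)=a_{\hat\theta}\bigl(u_2(a_1)\bigr)$ is correct and consistent with Lemma~\ref{lem:br}.

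The case you flag and leave open, however, is a genuine hole in the necessity direction, and you should know it is not one you can close: it is present, silently, in the paper's own proof. By Assumption~\ref{assume:main} with $U=\R_+^*$ we have $\phi'>0$ on $A$, so $\phi$ is strictly increasing and the regime $\phi(\hat a_2)>\phi(c(a_1))$ is exactly $\hat a_2>c(a_1)$. There the coefficient of $\theta$ is negative, and the supremum of the right-hand side over $\theta\in\R_+^*$ is the finite quantity $u_2(a_1)\bigl(\hat a_2-c(a_1)\bigr)>0$, approached as $\theta\to0^+$ and never attained. Consequently the choice $\pi(a_1,\hat a_2)=u_2(a_1)\bigl(\hat a_2-c(a_1)\bigr)$ for $\hat a_2>c(a_1)$ (with $+\infty$ for $\hat a_2<c(a_1)$ and $0$ on the consistency curve) is non-negative, vanishes only at $\hat a_2=c(a_1)$ as required by Definition~\ref{def:penalize-inconsistency}, and strictly dominates the right-hand side for every $\theta$ --- so the consistent action remains optimal even though $\pi$ is finite on all over-performing deviations. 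In other words, the $\theta\to\infty$ argument forces $\pi=+\infty$ only on $\{\hat a_2<c(a_1)\}$, and the ``only if'' claim in the form \eqref{eq:pi-crazy-infty} is actually refuted by the example above rather than merely unproven. Your suspicion that the opposite limit ``appears only to require a finite penalty'' is therefore correct; the honest conclusions are that sufficiency holds, and that necessity of the infinite penalty holds only for under-performing deviations.
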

\begin{proof}
    Let $a_1=a_{\hat \theta}(u_1)$ and $a_2=c(a_1)$. If $\pi$ is given by \eqref{eq:pi-crazy-infty}, we have
    \begin{equation}
        \label{eq:F-hat-a-consistency}
        \cva(a_1, a_2) = u_1 a_1 - \theta \phi(a_1) + u_2(a_1) a_2 - \theta \phi(a_2)
    \end{equation}
    and, for any $\hat a_2 \neq a_2$,
    \begin{equation}
        \label{eq:F-til-a-consistency}
        \cva(a_1, \hat a_2) = u_1 a_1 - \theta \phi(a_1) + u_2(a_1) \hat a_2 - \theta \phi(\hat a_2) - \pi(a_1,\hat a_2).
    \end{equation}
    \emph{Sufficiency.} Since $\pi(a_1,\hat
	a_2) = +\infty$ for $\hat a_2 \neq a_2$, we have $\cva(a_1,
	a_2) \geq \cva(a_1,\hat a_2)$ for every $\hat a_2\in A$.

    \emph{Necessity.} Assume $\cva(a_1,
	a_2) \geq \cva(a_1, \hat a_2)$ for every $\hat a_2\in A$, then from \eqref{eq:F-hat-a-consistency} and \eqref{eq:F-til-a-consistency}, we have
    \[
    \pi(a_1,\hat a_2) \geq \theta [\phi(a_2) - \phi(\hat a_2)] - u_2(a_1) (a_2 - \hat a_2).
    \]
    The above inequality must hold for every $\theta\in \R_+^*$, and neither $a_2$ nor $\hat a_2$ depend on $\theta$. Therefore, as $\theta\to \infty$, the only penalty satisfying the above inequality is \eqref{eq:pi-crazy-infty}. 
\end{proof}

% \iffalse
\section{Concluding Remarks}

This paper examined truthful mechanism design in two-stage principal-agent contracts with adverse selection. By modeling it as a Stackelberg game under incomplete information, we characterized the constraints that information asymmetry imposed on incentive design.
We developed a two-stage contract structure that incorporated randomized first-stage incentives with contingent second-stage rewards. Our analysis demonstrated a fundamental distinction between continuous and discrete type spaces. For continuous types, we proved that truthful play could only be induced through constant second-stage incentives, as any reactive incentives would result in agents strategically misrepresenting their types through untruthful actions. In the discrete case, we established the existence of non-trivial truthful mechanisms through monotonic step functions, given that the step sizes are sufficiently small.
The introduction of an adjustment mechanism that penalized inconsistent behavior across stages expanded the space of feasible truthful contracts. This mechanism enabled more flexible incentive structures while maintaining truthful play. The framework developed in this paper laid groundwork for exploring complex dynamic contracting scenarios and mechanism design problems.
% \fi

\bibliographystyle{abbrvnat}
\bibliography{refs}

\end{document}